\newtheorem{Rule}{Rule}
\begin{document}

\title{Linear Kernelizations for Restricted 3-Hitting Set Problems}

\author{Xuan Cai}

\institute{Shanghai Jiao Tong University\\
\email{caixuanfire@sjtu.edu.cn}}

\maketitle

\begin{abstract}
The 3-\textsc{Hitting Set} problem is also called the \textsc{Vertex
Cover} problem on 3-uniform hypergraphs. In this paper, we address
kernelizations of the \textsc{Vertex Cover} problem on 3-uniform
hypergraphs. We show that this problem admits a linear kernel in
three classes of 3-uniform hypergraphs. We also obtain lower and
upper bounds on the kernel size for them by the parametric duality.
\end{abstract}

\section{Introduction}
Let $C$ be a collection of subsets of a finite set $S$. A
\emph{hitting set} is a subset of $S$ that has a nonempty
intersection with every element of $C$. The \textsc{Hitting Set}
problem is to decide whether there is a hitting set with cardinality
at most $k$ for a given $(S,C)$. If the cardinality of every element
of $C$ is upper-bounded by a fixed natural number $d$, then the
corresponding problem would be a $d$-\textsc{Hitting Set} problem.
In the parameterized complexity \cite{dow:fel,nie,flu:gro}, it is
known that the $d$-\textsc{Hitting Set} problem is
\emph{fixed-parameter tractable}, i.e., solvable in time
$O(f(k)\cdot n^c)$ for some constant $c$ independent of $k$.

The 3-\textsc{Hitting Set} problem, a special case of the
$d$-\textsc{Hitting Set} problem, is a focal point of researches in
the parameterized complexity. Formally, it is defined as follows:

\begin{center}
\fbox{
\parbox[c]{8cm}{
\textsc{3-Hitting Set Problem}
\begin{description}
\item[Instance:]A collection $C$ of subsets with cardinality 3 of a finite set
$S$ and a nonnegative integer $k$.

\item[Parameter:]$k$.

\item[Problem:]Is there a hitting set $S'$ with $|S'|\leq k$.

\end{description} } }
\end{center}

Actually, $(S,C)$ could be regarded as a hypergraph such that $S$
and $C$ correspond to the sets of vertices and hyperedges
respectively. In this sense, the hitting set is equivalent to the
vertex cover, and the 3-\textsc{Hitting Set} problem is the same as
the \textsc{Vertex Cover} problem\footnote{\scriptsize{For
simplicity, we study the 3-\textsc{Hitting Set} problem from the
viewpoint of hypergraph theory.}} on 3-\emph{uniform} hypergraphs in
which every hyperedge consists of 3 vertices.

A parameterized problem $Q$ is a set of instances with the form
$(I,k)$ where $I$ is the input instance and $k$ is a nonnegative
integer. A \emph{kernelization} is a polynomial-time preprocessing
procedure that transforms one instance $(I,k)$ of a problem into
another $(I',k')$ of it such that
\begin{itemize}
\item $k'\leq k$.

\item $|I'|\leq f(k')$, where $f$ is a computable function.

\item $(I,k)$ is a ``yes''-instance if and only if $(I',k')$ is a
``yes''-instance.
\end{itemize}

Hence, a kernelization can shrink the instance until that its size
depends only on the parameter, so that we could find an algorithm,
at least a brute force one, to efficiently solve the problem on the
kernel $f(k')$. It is apparent that a problem is fixed-parameter
tractable if and only if it has a kernelization.

The development of kernelizations provides a new approach for
practically solving some NP-hard problems including the
\textsc{Vertex Cover} problem on 3-uniform hypergraphs. Presently,
these kernelizations have found their way into numerous applications
in many fields, e.g. bioinformatics
\cite{ruc:son,che:deh:rau:ste:tai}, computer networks
\cite{kuh:ric:wat:wel:zol} as well as software testing
\cite{jon:har}.

\subsection{Related Work}
Buss \cite{bus:gol} has given a kernelization with kernel size
$O(k^2)$ for the \textsc{Vertex Cover} problem on graphs by putting
``high degree elements'' into the cover. Similar to Buss' reduction,
Niedermeier and Rossmanith \cite{nie:ros} have proposed a cubic-size
kernelization for the \textsc{Vertex Cover} problem on 3-uniform
hypergraphs.

Fellows \emph{et al}.
\cite{abu:col:fel:lan:sut:sym,deh:fel:ros:sha,fel} have introduced
the crown reduction and obtained a $3k$-kernelization for the
\textsc{Vertex Cover} problem on graphs. Recently, Abu-Khzam
\cite{abu} has reduced further the kernel of this problem on
3-uniform hypergraphs to quadratic size by employing the crown
reduction.

It is known that an optimal solution to the linear programming for
the \textsc{Vertex Cover} problem on graphs can be transformed into
the \emph{half-integral} form, i.e., the variables take one of three
possible values $\{0,\frac{1}{2},1\}$. Based on this form, Nemhauser
and Trotter's Theorem
\cite{nem:tro,bar:eve,abu:col:fel:lan:sut:sym,khu} guarantees the
\textsc{Vertex Cover} problem has a $2k$-kernelization on graphs. We
also wish to follow the above idea to get a linear kernelization of
the \textsc{Vertex Cover} problem on 3-uniform hypergraphs. However,
it seems impossible to achieve the goal because
Lov$\mathrm{\acute{a}}$sz \cite{lov} and Chung \emph{et al}.
\cite{chu:fur:gar:gra} have shown that the optimal solution to the
linear programming has no such \emph{half-integral} form.

Very recently, Fellows \emph{et al}. \cite{bod:dow:fel:her} propose
a new method which allows us to show that many problems do not have
polynomial size kernels under reasonable complexity-theoretic
assumptions.

\subsection{Our Work}
In this paper, we show that the \textsc{Vertex Cover} problem in
three classes of 3-uniform hypergraphs has a linear kernelization.
Moreover, we provide lower and upper bounds on the kernel size for
them by the parametric duality.

The rest of this paper is organized as follows. In Section 2, we
introduce some definitions and notations. In Section 3, we study the
\textsc{Vertex Cover} problem on quasi-regularizable 3-uniform
hypergraphs, and show that this problem has a linear kernelization.
In Section 4 and 5, we also show this problem admits a linear
kernelization on bounded-degree and planar 3-uniform hypergraphs
respectively. Furthermore, we present lower and upper bounds on the
kernel size for them. Finally, we present some questions left open
and discuss the future work in Section 6.

\section{Preliminaries}
Let $\mathcal{H}=(\mathcal{V},\mathcal{E})$ be a hypergraph with
$\mathcal{V}=\{v_1,\cdots,v_n\}$ and
$\mathcal{E}=\{B_1,\cdots,B_m\}$. For a hyperedge $B_i$, if
$|B_i|=1$, then we call it a \emph{self-loop}. If there exists
another hyperedge $B_j$ with $B_j\subset B_i$, then we call $B_i$
\emph{dominated} by $B_j$. For a vertex $v$,
$E(v)=\{B_i\in\mathcal{H}\ |\ v\in B_i\}$ and $|E(v)|$ are the
\emph{incidence set} and the \emph{degree} of $v$ respectively.
Similarly, $E(A)=\bigcup_{v\in A}E(v)$ for a vertex subset $A$. If
there exists another vertex $w$ with $E(w)\subseteq E(v)$, then we
call $w$ \emph{dominated} by $v$. If $w,v\in B_i$, then we call $v$
an \emph{adjacent} vertex of $w$.

Similar to the definition of regular graphs, if all the vertices of
a hypergraph $\mathcal{H}$ have the same degree, then we call
$\mathcal{H}$ \emph{regular}.

\begin{definition}
If the resulting hypergraph is regular after replacing each
hyperedge $B_i$ of $\mathcal{H}$ with $k_i$-multiple hyperedges
($k_i\geq0$), then $\mathcal{H}$ is quasi-regularisable.
\end{definition}

Here, we define $k$-\emph{multiple} edges $B_i$ $k$ copies of $B_i$.
Note that $k=0$ means that $B_i$ is removed from the hypergraph.

A planar graph can be embedded on a plane without edge intersections
except at the endpoints. Analogously, we can define a planar
hypergraph.

\begin{definition}
$G_{\mathcal{H}}=(V_1\cup V_2,E)$ is a bipartite incidence graph of
the hypergraph $\mathcal{H}=(\mathcal{V},\mathcal{E})$ if
$G_{\mathcal{H}}$ satisfies the following conditions:
\begin{enumerate}
\item $V_1=\mathcal{V}$.

\item $V_2=\{v_B\ |\ B\in\mathcal{E}\}$.

\item $E=\{\{v,v_B\}\ |\ v\in V_1\text{ and }v_B\in V_2\text{ and }v\in
B\}$.
\end{enumerate}
\end{definition}

\begin{definition}
A hypergraph $\mathcal{H}$ is planar if $G_{\mathcal{H}}$ is planar.
\end{definition}

The optimization version of the \textsc{Vertex Cover} problem is
often concerned in kernelizations
\cite{abu:col:fel:lan:sut:sym,flu:gro}. Assigning a 0-1 variable for
each vertex, it is easy to establish the integer programming for the
optimal version of the \textsc{Vertex Cover} problem on
$\mathcal{H}$ as follows:

\begin{equation}\label{hvc}
\begin{array}{rll}
\min&\sum_{i=1}^nx_i\\
\text{subject to:}&x_i+x_j+x_k\geq1,&\{v_i,v_j,v_k\}\in\mathcal{E};\\
&x_i\in\{0,1\},&v_i\in\mathcal{V}.
\end{array}
\end{equation}
The optimal value of (\ref{hvc}) is called the \emph{node-covering
number} of $\mathcal{H}$, denoted by $\tau(\mathcal{H})$. Relaxing
the restriction to rational number field, we can obtain the
corresponding linear programming:
\begin{equation}\label{hlp}
\begin{array}{rll}
\min&\sum_{i=1}^nx_i\\
\text{subject to:}&x_i+x_j+x_k\geq1,&\{v_i,v_j,v_k\}\in\mathcal{E};\\
&0\leq x_i\leq1&v_i\in\mathcal{V}.
\end{array}
\end{equation}
The optimal value of (\ref{hlp}) is called the \emph{fractional
node-covering number} of $\mathcal{H}$, denoted by
$\tau^*(\mathcal{H})$.

\section{Quasi-regularizable 3-uniform Hypergraphs}
In this section, we consider quasi-regularizable 3-uniform
hypergraphs and show that the \textsc{Vertex Cover} problem has a
linear kernelization on them.

\begin{theorem}
Let $\mathcal{H}$ be a quasi-regularizable 3-uniform hypergraph. The
\textsc{Vertex Cover} problem admits a problem kernel of size $3k$
on $\mathcal{H}$.
\end{theorem}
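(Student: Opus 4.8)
The plan is to reduce the theorem to the single inequality $\tau(\mathcal{H}) \ge n/3$ for every quasi-regularizable $3$-uniform hypergraph on $n$ vertices, since then a ``yes''-instance, which satisfies $\tau(\mathcal{H}) \le k$, must have $n \le 3k$, and the kernelization collapses to a trivial size test.

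First I would unpack the defining property of quasi-regularizability. It supplies multiplicities $k_i \ge 0$, not all zero, for which replacing each hyperedge $B_i$ by $k_i$ copies produces an $r$-regular hypergraph with $r \ge 1$. Writing $K = \sum_i k_i$ for the total number of hyperedges of this multi-hypergraph and counting vertex--edge incidences in two ways --- once as $rn$ by regularity, once as $3K$ by $3$-uniformity --- yields the identity $3K = rn$, and hence $K/r = n/3$.

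Next I would turn these multiplicities into a feasible solution of the LP dual of (\ref{hlp}), namely the fractional matching program $\max \sum_i y_i$ subject to $\sum_{B_i \ni v} y_i \le 1$ for each vertex $v$ and $y_i \ge 0$. Setting $y_i = k_i/r$, the $r$-regularity gives $\sum_{B_i \ni v} y_i = r/r = 1 \le 1$ at every vertex, so the assignment is feasible, and its objective value is $\sum_i k_i/r = K/r = n/3$. By LP duality this dual optimum equals $\tau^{*}(\mathcal{H})$, whence $\tau^{*}(\mathcal{H}) \ge n/3$; equivalently, a one-line weak-duality estimate shows that any integral cover $S$ obeys $|S| \ge \sum_i y_i\,|S \cap B_i| \ge \sum_i y_i = n/3$, using feasibility for the first inequality and $|S \cap B_i| \ge 1$ for the second. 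Since $\tau(\mathcal{H}) \ge \tau^{*}(\mathcal{H})$, we conclude $\tau(\mathcal{H}) \ge n/3$.

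The kernelization is then immediate: compute $n$ in polynomial time; if $n > 3k$ the bound forces $\tau(\mathcal{H}) > k$, so we output a canonical ``no''-instance with $k' = 0 \le k$, and otherwise the input already has at most $3k$ vertices (and thus at most $\binom{3k}{3}$ hyperedges), so it is itself the desired kernel. I expect the main obstacle to lie not in the inequality but in the degenerate cases hidden in the definition: one must exclude isolated vertices, which would force $r = 0$, and insist that the regularizing multiplicities are not all zero, so that $r \ge 1$ and the rescaling $y_i = k_i/r$ is well defined. Once these are dispatched, the $3k$ bound, and with it the kernel, follows directly.
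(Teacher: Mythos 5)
Your proof is correct, and at its core it rests on the same LP-relaxation inequality as the paper's: the double-counting identity $3K=rn$ in the regularized multi-hypergraph is exactly the paper's $m'=\frac{1}{3}rn$, and your dual certificate $y_i=k_i/r$ is precisely the paper's step of ``summing up all the inequalities'' of the relaxed program, normalized by $r$. Still, the packaging differs in ways worth recording. The paper works with the LP of the regularized hypergraph $\mathcal{H}'$ and then silently invokes $\tau^*(\mathcal{H})=\tau^*(\mathcal{H}')$ and $\tau(\mathcal{H})=\tau(\mathcal{H}')$, neither of which is completely immediate when some $k_i=0$; your fractional matching lives directly on $\mathcal{H}$, and your one-line weak-duality estimate $|S|\geq\sum_i y_i\,|S\cap B_i|\geq\sum_i y_i=n/3$ bypasses the LP altogether, so nothing has to be transferred between $\mathcal{H}$ and $\mathcal{H}'$. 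You also prove only the lower bound $\tau(\mathcal{H})\geq n/3$, which is all the theorem needs, whereas the paper additionally derives $\tau^*(\mathcal{H}')\leq\frac{1}{3}n$ from the all-$\frac{1}{3}$ feasible solution in order to pin down $\tau^*$ exactly --- information that is not used in the conclusion. Finally, your explicit statement of the kernelization procedure (test $n>3k$ and output a canonical no-instance, else return the input) and your attention to the degenerate case $r=0$ forced by isolated vertices address points the paper leaves implicit, since it simply asserts $r>0$. In short: same key inequality, but your dual-certificate formulation is more self-contained and slightly more careful at the edges.
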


\begin{proof}
Let $\mathcal{H}=(\mathcal{V},\mathcal{E})$ with
$\mathcal{V}=\{v_1,\cdots,v_n\}$ and
$\mathcal{E}=\{B_1,\cdots,B_m\}$. Since $\mathcal{H}$ is
quasi-regularizable, we can transform it into an $r$-regular
hypergraph $\mathcal{H}'=(\mathcal{V},\mathcal{E}')$ with $r>0$. If
$m'$ denotes the number of hyperedges in $\mathcal{H}'$, then
$m'=\frac{1}{3}rn$.

Relaxing (\ref{hvc}), we establish the linear programming for the
\textsc{Vertex Cover} problem on $\mathcal{H}'$,
\begin{equation}\label{qr}
\begin{array}{rll}
\min&\sum_{i=1}^nx_i\\
\text{subject to:}&x_i+x_j+x_k\geq1,&\{v_i,v_j,v_k\}\in\mathcal{E}';\\
&0\leq x_i\leq1&v_i\in\mathcal{V}.
\end{array}
\end{equation}
It is easy to see that $x_1=\cdots=x_n=\frac{1}{3}$ is a feasible
solution. Thus, we have
\begin{equation}\label{ie1}
\tau^*(\mathcal{H}')\leq\sum_{i=1}^nx_i=\frac{1}{3}n.
\end{equation}

On the other hand, let $(x_1^*,\cdots,x_n^*)$ be an optimal solution
of (\ref{qr}). Note that the optimal solution must satisfy all the
restrictions. Summing up all the inequalities in (\ref{qr}), it is
not difficult to get
\begin{equation}\label{ie2}
r\cdot\tau^*(\mathcal{H}')=r\sum_{i=1}^nx_i^*\geq m'=\frac{1}{3}nr
\end{equation}

By (\ref{ie1}) and (\ref{ie2}), we have
$\tau^*(\mathcal{H}')=\frac{1}{3}n$. If $k<\tau(\mathcal{H})$, then
$\mathcal{H}$ has no vertex cover with size at most $k$. Therefore,
we have $n\leq3k$ because
$\tau^*(\mathcal{H})=\tau^*(\mathcal{H}')\leq\tau(\mathcal{H})=\tau(\mathcal{H}')\leq
k$.\qed
\end{proof}

\iffalse
 Actually, we have the following stronger theorem.

\begin{theorem}[\cite{ber}]
Let $\mathcal{H}=(\mathcal{V},\mathcal{E})$ be an $r$-uniform
hypergraph with $|\mathcal{V}|=n$. $\mathcal{H}$ is
quasi-regularisable if and only if
$\tau^*(\mathcal{H})=\frac{n}{r}$.
\end{theorem}

\begin{corollary}
For any $\varepsilon>0$, there is no
$(\frac{3}{2}-\varepsilon)k$-kernel for the \textsc{Independent Set}
problem on quasi-regularizable 3-uniform hypergraphs unless P$=$NP.
\end{corollary}
\fi

\section{Bounded-degree 3-uniform Hypergraphs}
In this section, we give linear kernelizations for the
\textsc{Vertex Cover} problem and its dual problem on bounded-degree
3-uniform hypergraphs, respectively. Meanwhile, we derive lower
bounds on kernel sizes for these two problem.

\begin{theorem}\label{tb}
Let $\mathcal{H}$ be a 3-uniform hypergraph with bounded degree $d$.
The \textsc{Vertex Cover} problem admits a problem kernel of size
$dk$ on $\mathcal{H}$.
\end{theorem}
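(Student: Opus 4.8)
The plan is to mirror the argument just used for quasi-regularizable hypergraphs, replacing the regularity identity $\tau^*(\mathcal{H})=\frac{1}{3}n$ with a degree-based lower bound on $\tau^*(\mathcal{H})$. The only reduction rule I would need is to delete every isolated vertex: a vertex $v$ with $|E(v)|=0$ lies in no hyperedge, so it never has to be covered and can be removed without changing the answer. This rule runs in polynomial time, preserves $3$-uniformity, and leaves both $\tau(\mathcal{H})$ and $\tau^*(\mathcal{H})$ unchanged (an isolated vertex occurs in no constraint of (\ref{hlp}), so an optimal solution sets its value to $0$). After applying it exhaustively, every surviving vertex has degree between $1$ and $d$.

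The core step is a double-counting bound on the fractional node-covering number. Let $(x_1^*,\dots,x_n^*)$ be an optimal solution of the LP relaxation (\ref{hlp}) for $\mathcal{H}$. Summing the constraint $x_i^*+x_j^*+x_k^*\ge 1$ over all $m$ hyperedges, the left-hand side rearranges to $\sum_{v}|E(v)|\,x_v^*$, since each vertex $v$ is counted once for every hyperedge of its incidence set $E(v)$. As $|E(v)|\le d$ and $x_v^*\ge 0$, this is at most $d\sum_v x_v^* = d\,\tau^*(\mathcal{H})$. Hence $d\,\tau^*(\mathcal{H})\ge m$, that is $\tau^*(\mathcal{H})\ge \frac{m}{d}$. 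Equivalently, and more elementarily, if $C$ is any vertex cover then every hyperedge meets $C$, so $m\le\sum_{v\in C}|E(v)|\le d|C|$.

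From here the kernelization is immediate. If the instance is a ``yes''-instance, then $\tau(\mathcal{H})\le k$, so $m\le d\,\tau^*(\mathcal{H})\le d\,\tau(\mathcal{H})\le dk$. Contrapositively, as soon as the reduced hypergraph has more than $dk$ hyperedges, we may safely report a trivial ``no''-instance. Thus after reduction $m\le dk$, which bounds the kernel; and since every surviving vertex lies in at least one of these hyperedges, $3$-uniformity gives $n\le 3m\le 3dk$, so the vertex set is controlled as well.

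The main obstacle is not the inequality itself but pinning down exactly which quantity the kernel size $dk$ measures and keeping the reductions airtight around it. The degree argument naturally bounds the \emph{number of hyperedges} by $dk$, whereas bounding the number of vertices costs an extra factor (here $3dk$); so I would state the kernel in terms of hyperedges and verify that deleting isolated vertices — the one rule that shrinks the vertex set — neither destroys $3$-uniformity nor alters $\tau(\mathcal{H})$ and $\tau^*(\mathcal{H})$. Confirming that the bound survives the reduction (deletions only decrease degrees and hyperedge counts, which can only help) is the part that needs genuine care.
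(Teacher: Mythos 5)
Your proof is correct and is essentially the paper's own argument: the paper likewise sums the LP constraints of (\ref{hlp}) over all hyperedges to get $m\le\sum_i d_i x_i^*\le d\,\tau^*(\mathcal{H})\le d\,\tau(\mathcal{H})\le dk$, so the kernel size $dk$ there also counts hyperedges. Your additional isolated-vertex rule and the resulting bound $n\le 3m\le 3dk$ on the vertex set is a sensible tightening of the statement that the paper leaves implicit, but it is not a different approach.
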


\begin{proof}
Let $\mathcal{H}=(\mathcal{V},\mathcal{E})$ with
$\mathcal{V}=\{v_1,\cdots,v_n\}$ and
$\mathcal{E}=\{B_1,\cdots,B_m\}$. Assume that $d_i$ is the degree of
$v_i$. Then, we establish the integer programming (\ref{hvc}) and
the linear programming (\ref{hlp}) for the \textsc{Vertex Cover}
problem on $\mathcal{H}$.

If $k<\tau(\mathcal{H})$, then $\mathcal{H}$ has no vertex cover
with size at most $k$. Summing up all the inequalities in
(\ref{hlp}),
\[
m\leq\sum_{i=1}^m \sum_{v_j\in B_i}x_j=\sum_{i=1}^nd_ix_i\leq
d\sum_{i=1}^nx_i
\]
Since $\tau^*(\mathcal{H})\leq\tau(\mathcal{H})\leq k$, we have
$m\leq d\cdot\tau^*(\mathcal{H})\leq d\cdot\tau(\mathcal{H})\leq
d\cdot k$. \qed
\end{proof}

Chen \emph{et al}. \cite{che:fer:kan:xia} have studied the
parametric duality and kernelization. They have proposed the
lower-bound technique on the kernel size.

\begin{theorem}[\cite{che:fer:kan:xia}]\label{dual}
Let $(P,s)$ be an NP-hard parameterized problem. Suppose that $P$
admits an $\alpha k$ kernelization and its dual $P_d$ admits an
$\alpha_dk_d$ kernelization, where $\alpha,\alpha_d\geq1$. If
$(\alpha-1)(\alpha_d-1)<1$, then P$=$NP.
\end{theorem}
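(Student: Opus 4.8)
The plan is to convert the two kernelizations into a single polynomial-time decision procedure for the (classical, NP-hard) problem underlying $P$; since that problem is NP-hard, the mere existence of such a procedure yields $\mathrm{P}=\mathrm{NP}$. The engine is an alternating application of the $P$-kernelization and the $P_d$-kernelization that repeatedly contracts the parameter by the factor $(\alpha-1)(\alpha_d-1)$. First I would pin down the structural relation tying $P$ to its parametric dual $P_d$: there is a size measure $n=|I|$ for which the primal and dual parameters are complementary, $k+k_d=n$, and $(I,k)$ is a yes-instance of $P$ exactly when $(I,\,n-k)$ is a yes-instance of $P_d$. This lets me reinterpret any instance freely as a primal or a dual instance without altering its answer.

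One \emph{round} then runs as follows. Starting from $(I_0,k_0)$ for $P$, apply the $\alpha k$ kernelization to obtain an equivalent $(I_0',k_0')$ with $k_0'\le k_0$ and $n_0':=|I_0'|\le\alpha k_0'$. Read $I_0'$ as a dual instance with parameter $k_d'=n_0'-k_0'\le(\alpha-1)k_0'$, and apply the $\alpha_d k_d$ kernelization to get an equivalent dual instance $(I_1,k_d'')$ with $k_d''\le k_d'$ and $n_1:=|I_1|\le\alpha_d k_d''$. Finally read $I_1$ back as a primal instance with parameter $k_1=n_1-k_d''$. Chaining the four equivalences shows $(I_0,k_0)\in P\iff(I_1,k_1)\in P$, while the size guarantees compound to
\[
k_1=n_1-k_d''\le(\alpha_d-1)k_d''\le(\alpha_d-1)(\alpha-1)k_0'\le(\alpha-1)(\alpha_d-1)\,k_0 .
\]

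Writing $\gamma:=(\alpha-1)(\alpha_d-1)<1$, each round yields an equivalent primal instance whose integer parameter obeys $k_{i+1}\le\gamma k_i$; since $\gamma<1$ this forces $k_{i+1}<k_i$ whenever $k_i\ge1$, so the parameter strictly decreases and reaches $0$ after at most $k_0\le n_0$ rounds (in fact after $O(\log k_0)$ of them). An instance with parameter $0$ kernelizes to size at most $\alpha\cdot0=0$, i.e.\ to the empty instance, and is decided trivially. Each round is a polynomial-time kernelization and there are polynomially many rounds on instances of polynomially bounded size, so the whole procedure decides $P$ in polynomial time; as $P$ is NP-hard, $\mathrm{P}=\mathrm{NP}$.

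The step I expect to be the main obstacle is the bookkeeping across the primal--dual switch: one must verify that the complementary relation $k+k_d=n$ together with the two kernel-size bounds really multiplies out to the contraction factor $(\alpha-1)(\alpha_d-1)$, and that logical equivalence is preserved at every reinterpretation between $P$ and $P_d$. Once the per-round contraction is established, the hypothesis $(\alpha-1)(\alpha_d-1)<1$ makes the geometric decay, the termination count, and the empty-kernel base case entirely routine.
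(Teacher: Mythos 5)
The paper does not prove this theorem: it is quoted from Chen, Fernau, Kanj and Xia \cite{che:fer:kan:xia} and used as a black box, so there is no in-paper proof to compare against. Judged on its own, your argument is correct and is essentially the standard one. Two remarks. First, a comparison with the proof in the cited source: Chen et al.\ do not compose the two kernelizations in a single round as you do; instead they compare $k$ against a threshold fraction of $s(I)$ and apply \emph{one} of the two kernelizations, chosen so that the size function $s(I)$ shrinks by the factor $\alpha\alpha_d/(\alpha+\alpha_d)$, which is $<1$ exactly when $(\alpha-1)(\alpha_d-1)<1$. Your variant alternates both kernels and instead contracts the \emph{parameter} by the factor $(\alpha-1)(\alpha_d-1)$; the inequality chain $k_1\leq(\alpha_d-1)k_d''\leq(\alpha_d-1)(\alpha-1)k_0$ is right, and since the parameter is a nonnegative integer the iteration terminates after polynomially many (indeed $O(\log k_0)$) rounds, each polynomial time, so both routes reach P$=$NP. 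Your version is arguably cleaner because no threshold case distinction is needed. Second, a small but worth-fixing slip: the duality is defined via the size function, $k_d=s(I)-k$ with $0\leq k\leq s(I,k)\leq|I|$, not via the raw length $|I|$; you should run the complementarity bookkeeping with $s(I')$ rather than $n'=|I'|$ (the kernel bounds $\alpha k$ and $\alpha_d k_d$ are bounds on this size measure). Since $s(I')\leq|I'|$ and every inequality you use goes the right way, nothing breaks, but the statement ``$k+k_d=n$ with $n=|I|$'' is not literally the definition being used. You should also say a word about why an instance with parameter $0$ (equivalently, with $s(I')=0$ after kernelizing) is decidable in polynomial time; this is the trivial base case that closes the induction.
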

Here, $k_d=s(I)-k$ and
$s:\sigma^*\times\mathbb{N}\rightarrow\mathbb{N}$ is a \emph{size
function} for a parameterized problem $P$ if
\begin{itemize}
\item $0\leq k\leq s(I,k)$

\item $s(I,k)\leq|I|$

\item $s(I,k)=s(I,k')$ for all appropriate $k,k'$.
\end{itemize}

For the \textsc{Vertex Cover} problem, it is clear that its dual
problem is \textsc{Independent Set} problem.

\begin{center}
\fbox{
\parbox[c]{8cm}{
\textsc{Independent Set Problem}
\begin{description}
\item[Instance:]A hypergraph $\mathcal{H}=(\mathcal{V},\mathcal{E})$
and a nonnegative integer $k$.

\item[Parameter:]$k$.

\item[Problem:]Is there a vertex subset $I\subseteq \mathcal{V}$
such that $|I|\geq k$ and each hyperedge contains no two vertices of
$I$.
\end{description} } }
\end{center}

\begin{theorem}\label{tbi}
Let $\mathcal{H}$ be a 3-uniform hypergraph with bounded degree $d$.
The \textsc{Independent Set} problem admits a problem kernel of size
$(2d+1)dk$ on $\mathcal{H}$.
\end{theorem}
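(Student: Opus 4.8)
The plan is to build the kernel by greedily computing a maximal \emph{strong} independent set and then splitting into cases on its size. First I would, in polynomial time, construct a set $I\subseteq\mathcal{V}$ that is maximal with respect to the property defining the problem, namely that every hyperedge meets $I$ in at most one vertex: start from $I=\emptyset$ and repeatedly add any vertex whose addition does not create a hyperedge containing two members of $I$, stopping when no such vertex remains. If $|I|\ge k$, then $I$ itself witnesses a ``yes''-instance, and I output a trivial constant-size yes-instance as the kernel.

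The substantive case is $|I|<k$. Here I would exploit maximality: every vertex $v\notin I$ must lie in a common hyperedge with some member of $I$, since otherwise $v$ could be added to $I$ without violating the constraint, contradicting maximality. Because $\mathcal{H}$ has degree bounded by $d$, each $u\in I$ lies in at most $d$ hyperedges, and each such hyperedge contributes at most two vertices besides $u$; hence $u$ shares a hyperedge with at most $2d$ other vertices. Therefore the number of vertices outside $I$ is at most $2d|I|$, which gives
\[
n=|I|+|\mathcal{V}\setminus I|\le |I|+2d|I|=(2d+1)|I|<(2d+1)k,
\]
so the instance already contains fewer than $(2d+1)k$ vertices.

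Finally I would convert this vertex bound into the stated size bound exactly as in the proof of Theorem~\ref{tb}. Counting incidences, $3m=\sum_{i=1}^n d_i\le dn$, so the number of hyperedges satisfies $m\le dn\le d(2d+1)k=(2d+1)dk$; removing self-loops and dominated hyperedges beforehand can only decrease this count. Thus the reduced instance is a problem kernel of size $(2d+1)dk$, and the greedy construction plus these bookkeeping reductions run in polynomial time.

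The main obstacle is the gap between \emph{maximal} and \emph{maximum}: a greedily maximal strong independent set of size less than $k$ does not by itself preclude the existence of a larger one, so in the second case I cannot simply declare a ``no''-instance. The point that makes the argument go through is that maximality, though not optimality, already forces the number of vertices (and hence the whole instance) to be bounded in terms of $k$, which is precisely what a kernelization demands; the correctness of the ``yes''-branch and the polynomial running time are then routine.
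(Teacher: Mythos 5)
Your proof is correct and follows essentially the same route as the paper: both arguments take an independent set $I$ of size less than $k$, observe that by maximality every vertex lies in $I$ or shares a hyperedge with a vertex of $I$, and use the degree bound $d$ to count at most $2d$ such neighbours per element of $I$ and hence at most $(2d+1)dk$ hyperedges. Your version is in fact slightly tighter in execution -- using a greedily computed \emph{maximal} set rather than the paper's \emph{maximum} one makes the kernelization explicitly polynomial-time, and your incidence count $3m\le dn$ would even yield the stronger bound $m\le\frac{1}{3}(2d+1)dk$ if you kept the factor $3$.
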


\begin{proof}
Without loss of generality, let $I$ be a maximum independent set of
$\mathcal{H}$ and $|I|\leq k$. Since $\mathcal{H}$ is a 3-uniform
hypergraph with bounded degree $d$, the number of its adjacent
vertices is bounded by $2d$ for each vertex $v\in I$, and the number
of hyperedges among these adjacent vertices is bounded by $2d^2$.
Therefore, we have
\[
|\mathcal{E}|\leq k\cdot d+2d^2\cdot k=(2d+1)dk
\]
\qed
\end{proof}

By Theorem~\ref{dual}, we easily get lower bounds on kernel sizes
for the \textsc{Vertex Cover} problem and the \textsc{Independent
Set} problem on 3-uniform hypergraphs with bounded degree $d$.

\begin{corollary}
For any $\varepsilon>0$, there is no
$(\frac{2d^2+d}{2d^2+d-1}-\varepsilon)k$-kernel for the
\textsc{Vertex Cover} problem on 3-uniform hypergraphs with bounded
degree unless P$=$NP.
\end{corollary}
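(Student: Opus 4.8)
The plan is to derive this lower bound as a direct instance of the parametric duality machinery of Theorem~\ref{dual}, pairing the hypothetical \textsc{Vertex Cover} kernel against the \textsc{Independent Set} kernel already produced in Theorem~\ref{tbi}. Concretely, I would argue by contradiction: suppose that for some $\varepsilon>0$ the \textsc{Vertex Cover} problem on bounded-degree $3$-uniform hypergraphs admitted a kernel of size $\alpha k$ with
\[
\alpha=\frac{2d^2+d}{2d^2+d-1}-\varepsilon .
\]
By Theorem~\ref{tbi} its dual, the \textsc{Independent Set} problem, admits a kernel of size $(2d+1)dk_d=(2d^2+d)k_d$, so I may take $\alpha_d=2d^2+d$. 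Both coefficients are at least $1$ (for the relevant small $\varepsilon$, $\alpha\ge 1$ since $\frac{2d^2+d}{2d^2+d-1}>1$, and $\alpha_d=2d^2+d\ge 1$ for $d\ge1$), so the hypotheses of Theorem~\ref{dual} on the coefficients are met.

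The heart of the argument is then a short computation showing the product falls strictly below $1$. Writing out the two factors,
\[
\alpha-1=\frac{1}{2d^2+d-1}-\varepsilon,\qquad \alpha_d-1=2d^2+d-1,
\]
and multiplying gives
\[
(\alpha-1)(\alpha_d-1)=\Bigl(\frac{1}{2d^2+d-1}-\varepsilon\Bigr)(2d^2+d-1)=1-\varepsilon(2d^2+d-1)<1,
\]
where the final inequality uses $\varepsilon>0$ and $2d^2+d-1\ge1$ for $d\ge1$. Thus $(\alpha-1)(\alpha_d-1)<1$, and Theorem~\ref{dual} forces $\mathrm{P}=\mathrm{NP}$. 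Contrapositively, no such kernel can exist unless $\mathrm{P}=\mathrm{NP}$, which is exactly the claimed bound.

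The step I expect to need the most care is verifying that Theorem~\ref{dual} genuinely applies, rather than the algebra. Theorem~\ref{dual} requires $(P,s)$ to be an \emph{NP-hard} parameterized problem equipped with a legitimate \emph{size function} $s$, so I would first confirm that \textsc{Vertex Cover} on bounded-degree $3$-uniform hypergraphs is NP-hard (this follows from the NP-hardness of the underlying $3$-\textsc{Hitting Set} problem, which persists under the degree bound) and exhibit $s(I,k)=|\mathcal{V}|$ as a valid size function, so that $k_d=|\mathcal{V}|-k$ matches the standard \textsc{Vertex Cover}/\textsc{Independent Set} duality. A secondary point to address is the regime of $\varepsilon$: the bound is only meaningful while $\alpha\ge1$, i.e.\ for $\varepsilon\le\frac{1}{2d^2+d-1}$, and I would note that for larger $\varepsilon$ the stated coefficient drops below $1$ and the nonexistence of the kernel is either vacuous or trivial. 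With these provisos in place, the corollary is immediate from the computation above.
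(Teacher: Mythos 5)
Your proposal is correct and follows exactly the route the paper intends: the paper gives no explicit proof of this corollary, simply invoking Theorem~\ref{dual} with the $(2d+1)dk=(2d^2+d)k$ \textsc{Independent Set} kernel of Theorem~\ref{tbi} as the dual bound, which is precisely your pairing and your computation $(\alpha-1)(\alpha_d-1)=1-\varepsilon(2d^2+d-1)<1$. Your added remarks on NP-hardness, the size function $s(I,k)=|\mathcal{V}|$, and the regime of $\varepsilon$ are sensible due diligence that the paper omits.
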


\begin{corollary}
For any $\varepsilon>0$, there is no
$(\frac{d}{d-1}-\varepsilon)k$-kernel for the \textsc{Independent
Set} problem on 3-uniform hypergraphs with bounded degree unless
P$=$NP.
\end{corollary}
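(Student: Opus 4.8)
The plan is to apply the parametric duality of Theorem~\ref{dual} with the \textsc{Independent Set} problem in the role of $P$ and the \textsc{Vertex Cover} problem in the role of its dual $P_d$, exploiting the fact established earlier that \textsc{Vertex Cover} on bounded-degree 3-uniform hypergraphs already admits a small kernel.

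First I would fix the parameters demanded by Theorem~\ref{dual}. Since the dual of \textsc{Independent Set} is exactly \textsc{Vertex Cover}, and Theorem~\ref{tb} supplies a $dk$-kernelization for \textsc{Vertex Cover} on 3-uniform hypergraphs with bounded degree $d$, I would set $\alpha_d = d$. The problem is NP-hard in this class, so the hypothesis of Theorem~\ref{dual} is met.

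Next I would argue by contradiction. Suppose, for some $\varepsilon > 0$, that \textsc{Independent Set} admitted a $(\frac{d}{d-1} - \varepsilon)k$-kernelization; then I would take $\alpha = \frac{d}{d-1} - \varepsilon$. The core computation is to evaluate the product $(\alpha - 1)(\alpha_d - 1)$. Using $\frac{d}{d-1} - 1 = \frac{1}{d-1}$, I get $\alpha - 1 = \frac{1}{d-1} - \varepsilon$, hence
\[
(\alpha - 1)(\alpha_d - 1) = \left(\frac{1}{d-1} - \varepsilon\right)(d-1) = 1 - \varepsilon(d-1) < 1,
\]
since $\varepsilon > 0$ and $d > 1$. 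Theorem~\ref{dual} then forces P$=$NP, which is the desired conclusion.

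The step I expect to require the most care is not any single calculation but the correct bookkeeping of which kernel bound plays the role of $\alpha_d$: one must feed in the $dk$-kernel from Theorem~\ref{tb}, not the $(2d+1)dk$-kernel of Theorem~\ref{tbi} (which is instead what yields the companion lower bound for \textsc{Vertex Cover}). Getting this pairing right is exactly what produces the ratio $\frac{d}{d-1}$ rather than $\frac{2d^2+d}{2d^2+d-1}$.
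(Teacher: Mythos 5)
Your proposal is correct and is exactly the argument the paper intends (the paper gives no explicit proof, merely invoking Theorem~\ref{dual}): pair the hypothetical $(\frac{d}{d-1}-\varepsilon)k$-kernel for \textsc{Independent Set} with the $dk$-kernel for its dual \textsc{Vertex Cover} from Theorem~\ref{tb}, and check $(\alpha-1)(\alpha_d-1)=1-\varepsilon(d-1)<1$. Your remark about which kernel bound plays the role of $\alpha_d$ is also the right bookkeeping, matching the companion corollary that uses the $(2d^2+d)k$ bound from Theorem~\ref{tbi}.
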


\section{Planar 3-uniform Hypergraphs}
In this section, we exhibit three equivalent parameterized problems
on planar 3-uniform hypergraphs. Building on them, we provide a
linear kernelization and a lower bound on the kernel size for the
\textsc{Vertex Cover} problem on planar 3-uniform hypergraphs.

\begin{theorem}\label{t1}
Let $\mathcal{H}$ be a planar 3-uniform hypergraph. The
\textsc{Vertex Cover} problem admits a problem kernel of size $67k$
on $\mathcal{H}$.
\end{theorem}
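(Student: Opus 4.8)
The plan is to express the problem as a domination problem on the planar incidence graph and then invoke a known planar kernelization. Observe that a set $S'\subseteq\mathcal{V}$ covers every hyperedge of $\mathcal{H}$ if and only if, in $G_{\mathcal{H}}=(V_1\cup V_2,E)$, every blue node $v_B\in V_2$ has a neighbour of $S'$ among the red nodes $V_1$; equivalently, $S'$ is a set that dominates all of $V_2$. Since $\mathcal{H}$ is planar, $G_{\mathcal{H}}$ is planar, so \textsc{Vertex Cover} on $\mathcal{H}$ is exactly a red-blue (annotated) domination problem on a planar graph. This is the first of the three equivalent formulations promised in the section. Before proceeding I would apply the obvious clean-up rules—deleting isolated vertices and removing each hyperedge $B_i$ that is dominated by some $B_j$—so that the later steps operate on a non-degenerate instance with $k'\le k$.

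Next I would turn the red-blue domination instance into an ordinary \textsc{Dominating Set} instance on a planar graph, parameter-preservingly. Red nodes must be allowed to enter the solution, while blue nodes must be dominated but never forced or rewarded for entering the solution; a small planar gadget attached to each node can enforce exactly this, so that a dominating set of size $k$ in the constructed graph corresponds to a cover of size $k$ of $\mathcal{H}$ and vice versa. Verifying that the gadgets keep the graph planar and that the correspondence is exact (hence the yes/no equivalence) is the technical content of this step, and it supplies the second and third equivalent problems.

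With \textsc{Vertex Cover} on $\mathcal{H}$ reduced to \textsc{Dominating Set} on a planar graph, I would apply the linear kernelization of Chen \emph{et al}.\ \cite{che:fer:kan:xia}, which yields a kernel of size $67k$ for planar \textsc{Dominating Set}. Pulling the kernel back through the equivalences bounds $|\mathcal{V}|$, and, via the estimate $|E|\le 2(|V_1|+|V_2|)-4$ for planar bipartite incidence graphs, also $|\mathcal{E}|$, by $67k$; this gives the stated kernel for the original problem.

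The main obstacle I expect is the middle reduction: red-blue domination does not translate into plain \textsc{Dominating Set} for free. The gadgets must simultaneously preserve planarity, so that the planar result of Chen \emph{et al}.\ is applicable, and preserve the parameter exactly, since any additive or multiplicative overhead would spoil the clean $67k$ bound. Designing such gadgets and proving the exact solution-size correspondence in both directions is the crux; once this is in place, the kernel bound follows immediately from the cited theorem.
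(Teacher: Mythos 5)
Your high-level architecture is exactly the paper's: \textsc{Vertex Cover} on $\mathcal{H}$ is rephrased as the problem of dominating all of $V_2$ using only vertices of $V_1$ in the planar incidence graph $G_{\mathcal{H}}$ (the paper calls this \textsc{Quasi-dominating Set}), this is converted to plain \textsc{Dominating Set} on a planar graph, and the $67k$ kernel of Chen \emph{et al}.\ \cite{che:fer:kan:xia} is invoked. However, the step you yourself flag as the crux --- converting the red-blue/annotated domination instance to ordinary \textsc{Dominating Set} with \emph{zero} parameter overhead while preserving planarity --- is precisely the content of the proof, and you leave it unspecified. Worse, the direction you propose (``a small planar gadget attached to each node'') is unlikely to succeed: the standard gadgets that force a vertex out of, or excuse a vertex from being dominated in, a dominating set (pendant vertices, pendant paths, etc.) each contribute at least one extra vertex to every optimal dominating set, so the parameter correspondence becomes $k \mapsto k + \Theta(|V_1|+|V_2|)$ rather than $k \mapsto k$, which destroys the $67k$ bound. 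This is a genuine gap, not a deferrable technicality.

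The paper's resolution uses no per-node gadgets at all. It defines the \emph{local complete graph} $G^K_{\mathcal{H}}$ by adding to $G_{\mathcal{H}}$ the three edges $\{u_i,u_j\}$ for every hyperedge $B=\{u_1,u_2,u_3\}$, so that each hyperedge's incidence star becomes a $K_4$ with $v_B$ in its interior; this preserves planarity. The $K_4$ structure kills both obstructions simultaneously: any dominating set of $G^K_{\mathcal{H}}$ containing some $v_B\in V_2$ can swap $v_B$ for a vertex of $B$ at no cost in cardinality (so one may assume the solution lies in $V_1$), and any cover $D\subseteq V_1$ automatically dominates every non-isolated vertex of $V_1$ because each such vertex shares a hyperedge, hence a clique edge, with some member of $D$. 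Thus $\tau(\mathcal{H})$ equals the domination number of $G^K_{\mathcal{H}}$ exactly, and the $67k$ bound transfers verbatim. If you want to complete your proof, replace your gadget step with this construction (and add the routine check that the reduction rules of \cite{che:fer:kan:xia} can be made to respect the $V_1$/$V_2$ partition so the reduced graph can be decoded back into a hypergraph).
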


This proof consists of two parts. First, we construct two
parameterized problems and show that they are equivalent to the
\textsc{Vertex Cover} problem on planar 3-uniform hypergraphs. In
the second part, we show Theorem~\ref{t1} based on the known linear
kernelization \cite{alb:fel:nie:1,alb:fel:nie:2,che:fer:kan:xia} for
the \textsc{Dominating Set} on planar graphs.

\subsection{Equivalent Problems}
A \emph{dominating set} of a graph is a vertex subset $D$ such that
either $v\in D$ or $v$ is adjacent to some vertex of $D$ for each
vertex $v$ of the graph. $G_{\mathcal{H}}$ and $G^K_{\mathcal{H}}$
denote the bipartite incidence graph and the local complete graph of
a hypergraph $\mathcal{H}$.

\begin{definition}\label{def1}
$G^K_{\mathcal{H}}$ is the local complete graph of $\mathcal{H}$ if
$G^K_{\mathcal{H}}$ satisfies the following conditions:
\begin{enumerate}
\item $V(G^K_{\mathcal{H}})=V(G_{\mathcal{H}})=V_1\cup V_2$.

\item $E(G^K_{\mathcal{H}})=E(G_{\mathcal{H}})\cup\{\{x,y\}\ |\ \{x,y\}\subseteq B\in\mathcal{E}\}$.
\end{enumerate}
\end{definition}

\begin{lemma}
If $\mathcal{H}$ is 3-uniform and planar, then $G^K_{\mathcal{H}}$
is planar.
\end{lemma}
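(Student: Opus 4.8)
The plan is to start from a fixed planar embedding of $G_{\mathcal{H}}$, which exists by the hypothesis that $\mathcal{H}$ is planar, and to insert the extra edges of $G^K_{\mathcal{H}}$ one triangle at a time, using only the local structure of the embedding near each hyperedge node. The decisive observation is that 3-uniformity forces every $v_B\in V_2$ to have degree exactly $3$: since $|B|=3$ and $V_2$ is an independent set of $G_{\mathcal{H}}$ (by condition 3 of the incidence-graph definition, edges run only between $V_1$ and $V_2$), the only edges incident to $v_B$ are $\{a,v_B\},\{b,v_B\},\{c,v_B\}$ for $B=\{a,b,c\}$. By Definition~\ref{def1}, the edges that $G^K_{\mathcal{H}}$ adds to $G_{\mathcal{H}}$ are precisely the triangles $\{a,b\},\{b,c\},\{c,a\}$, one for each such $B$, and every added edge has both endpoints in $V_1$.

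First I would record the local picture around a single hyperedge node. In the chosen embedding the three edges at $v_B$ leave it in some rotation order, say $a,b,c$, and therefore cut a small open disk around $v_B$ into three sectors; each sector is a wedge of some face of $G_{\mathcal{H}}$ and is bounded by two consecutive edges of the star at $v_B$, i.e.\ by a pair among $\{a,v_B\},\{b,v_B\},\{c,v_B\}$. Thus the three sectors correspond exactly to the three pairs $\{a,b\},\{b,c\},\{c,a\}$. I would then route each new edge, say $\{a,b\}$, as a curve that starts at $a$, runs alongside and arbitrarily close to the edge $\{a,v_B\}$ into the disk, passes through the sector bounded by $\{a,v_B\}$ and $\{b,v_B\}$, and leaves alongside $\{b,v_B\}$ to reach $b$; the other two new edges are routed analogously through their own sectors. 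Because each such curve merely parallels existing edges and stays within a single face, it crosses no edge of $G_{\mathcal{H}}$, and the three new edges of one triangle lie in three distinct sectors, so they are pairwise non-crossing. This already shows that completing a single degree-$3$ vertex to a $K_4$ preserves planarity.

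The hard part will be carrying out all these insertions simultaneously without the triangles of different hyperedges interfering. I would handle this by shrinking the disks around the $v_B$ so that they are pairwise disjoint and by keeping every new edge inside a thin corridor that hugs the two star edges it parallels. The only way two new edges can come close is where their corridors run alongside a common edge of $G_{\mathcal{H}}$; but an edge $\{a,v_B\}$ is incident to exactly one node of $V_2$, namely $v_B$, since $V_2$ is independent, so the only new edges paralleling $\{a,v_B\}$ are the two edges $\{a,b\},\{a,c\}$ of the single triangle of $B$, and these occupy the two sectors on opposite sides of $\{a,v_B\}$. Finally, at a shared vertex $a\in V_1$ the several new edges emanating from $a$ leave in thin corridors alongside the distinct edges $\{a,v_{B'}\}$ for the various hyperedges $B'$ containing $a$, so, respecting the rotation at $a$, they can be ordered so that none of them cross. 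Collecting these routings yields a planar embedding of $G^K_{\mathcal{H}}$, which is exactly what is required.
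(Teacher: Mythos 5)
Your proof is correct. The paper itself offers no argument for this lemma---its ``proof'' is the single sentence ``It is trivial.''---so your write-up is a legitimate filling-in of that claimed triviality rather than a divergence from it. The key points you isolate are exactly the right ones: every $v_B\in V_2$ has degree exactly $3$ in $G_{\mathcal{H}}$ because $\mathcal{H}$ is 3-uniform and $V_2$ is independent, so the three edges added for $B$ can be routed inside the three sectors of the rotation at $v_B$, hugging the star edges; and the only possible interference between different triangles is along a shared star edge or at a shared $V_1$-vertex, both of which you dispose of correctly (the two triangle edges paralleling $\{a,v_B\}$ lie on opposite sides of it, and corridors at a shared vertex $a$ follow distinct edges of the rotation at $a$). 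The only cosmetic remark is that if two hyperedges $B,B'$ share two vertices $a,b$, your construction draws the single edge $\{a,b\}$ of $G^K_{\mathcal{H}}$ twice; this is harmless (planarity is unaffected by parallel edges, or one copy may simply be discarded), but it is worth a sentence.
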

\begin{proof}
It is trivial.\qed
\end{proof}

Here, we consider two parameterized problems. One parameterized
problem is the \textsc{Quasi-dominating Set} problem.

\begin{center}
\fbox{\parbox[c]{8cm}{\textsc{Quasi-dominating Set Problem}
\begin{description}
\item[Instance:]A bipartite graph $G=(V_1\cup
V_2,E)$ and a nonnegative integer $k$.

\item[Parameter:]$k$.

\item[Problem:]Is there $D\subseteq V_1$ such that
$\{v\ |\ \{v,w\}\in E\text{ and }w\in D\}=V_2$ and $|D|\leq k$?
\end{description}
}}
\end{center}

It is apparent that the \textsc{Vertex Cover} problem on
$\mathcal{H}$ is equivalent to the \textsc{Quasi-dominating Set}
problem on $G_{\mathcal{H}}$.

The other is the \textsc{Dominating Set} problem.

\begin{center}
\fbox{\parbox[c]{8cm}{\textsc{Dominating Set Problem}
\begin{description}
\item[Instance:]A graph $G=(V,E)$ and a nonnegative integer $k$.

\item[Parameter:]$k$.

\item[Problem:]Is there a dominating set $D\subseteq V$ with $|D|\leq
k$?
\end{description}
}}
\end{center}

\begin{lemma}
Let $\mathcal{H}$ be a 3-uniform hypergraph. $(G_{\mathcal{H}}^K,k)$
is a YES-instance of the \textsc{Dominating Set} problem if and only
if $(G_{\mathcal{H}},k)$ is a YES-instance of the
\textsc{Quasi-dominating Set} problem.
\end{lemma}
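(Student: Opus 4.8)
The plan is to prove the two implications directly, exploiting the local structure of $G^K_{\mathcal{H}}$. First I would record the two facts that drive everything: each hyperedge-node $v_B\in V_2$ has, in $G^K_{\mathcal{H}}$, neighbourhood exactly equal to $B\subseteq V_1$ (the incidence edges to the three vertices of $B$, and nothing else, since the added clique edges only join pairs inside $V_1$); and each vertex $w\in V_1$ lying in a hyperedge $B$ is joined in $G^K_{\mathcal{H}}$ to the other vertices of $B$ by clique edges and to $v_B$ by an incidence edge. I would also note that a quasi-dominating set $D\subseteq V_1$ of $G_{\mathcal{H}}$ is precisely a set that hits every hyperedge $B$, since dominating $v_B$ means $D\cap B\neq\emptyset$; so the statement reduces to relating hitting sets of $\mathcal{H}$ with dominating sets of $G^K_{\mathcal{H}}$.

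For the direction from \textsc{Quasi-dominating Set} to \textsc{Dominating Set}, I would take a quasi-dominating set $D\subseteq V_1$ with $|D|\le k$ and argue that $D$ is already a dominating set of $G^K_{\mathcal{H}}$. Every $v_B\in V_2$ is dominated because $D$ hits $B$. For $w\in V_1$, if $w\notin D$ then I use that $w$ lies in some hyperedge $B$ and that $D\cap B$ contains some $u$; the clique edge $\{u,w\}$ then dominates $w$. This is where the only real subtlety appears, and I expect it to be the main obstacle: an \emph{isolated} vertex of $\mathcal{H}$ (one in no hyperedge) is isolated in $G^K_{\mathcal{H}}$, so it must belong to every dominating set but need not belong to any hitting set, which breaks the equivalence for small $k$. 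I would dispose of this by assuming, without loss of generality, that $\mathcal{H}$ has no isolated vertices — a harmless preprocessing step for \textsc{Vertex Cover}, since such vertices lie in no hyperedge and can simply be deleted.

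For the converse, I would start from a dominating set $D$ of $G^K_{\mathcal{H}}$ with $|D|\le k$ and convert it into a quasi-dominating set $D'\subseteq V_1$ of no larger size by replacing each hyperedge-node $v_B\in D\cap V_2$ with an arbitrary vertex of $B$ while keeping every vertex of $D\cap V_1$. Collisions among the replacements only shrink the set, so $|D'|\le|D|\le k$. To check that $D'$ hits every hyperedge $B$, I would use that $v_B$ was dominated by $D$: either some $u\in B\cap D\cap V_1$, which survives into $D'$, or $v_B\in D$, whose replacement lies in $B$; in both cases $D'\cap B\neq\emptyset$. Hence $D'$ is a quasi-dominating set of $G_{\mathcal{H}}$ of size at most $k$, completing the equivalence. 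Apart from the isolated-vertex caveat, every step is a short case analysis, so no heavy calculation is required.
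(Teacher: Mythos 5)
Your proof is correct and follows essentially the same route as the paper: the nontrivial direction is handled by replacing each hyperedge-node $v_B\in D\cap V_2$ by a vertex of $B$, using that $N[v_B]\subseteq N[u]$ for any $u\in B$ in $G^K_{\mathcal{H}}$, which is exactly the paper's Figure-1 argument. The one place you go beyond the paper is worth noting: the paper dismisses the direction from quasi-dominating sets to dominating sets as ``trivial,'' but as you observe it genuinely fails when $\mathcal{H}$ has isolated vertices (such a vertex is isolated in $G^K_{\mathcal{H}}$ and must lie in every dominating set, yet need lie in no hitting set), so your no-isolated-vertices convention is actually required for the lemma to hold as stated.
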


\begin{proof}
Let $D$ be a dominating set of $G_\mathcal{H}^K$. If $D\cap
V_2=\emptyset$, then $D$ is also a solution of the
\textsc{Quasi-dominating Set} problem on $G_{\mathcal{H}}$.

Otherwise, there is a vertex $v_B\in D\cap V_2$. By definition of
$G_{\mathcal{H}}^K$, $v_B$ corresponds to a hyperedge in
$\mathcal{H}$, says $B=\{u_1,u_2,u_3\}$. Figure~\ref{fig1}
illustrates that $D\backslash v_B\cup\{u_1\}$ is also a dominating
set of $G_{\mathcal{H}}^K$.

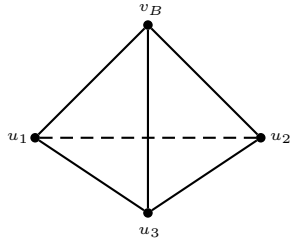
\begin{figure}
\begin{pspicture}(-3.5,-1.5)(13,1.6)

\dotnode(2.5,1.5){D}

\dotnode(1,0){A} \dotnode(4,0){C}

\dotnode(2.5,-1){E}

\uput[0](2.2,1.7){$\scriptstyle v_B$}

\uput[0](0.45,0){$\scriptstyle u_1$}

\uput[0](3.95,0){$\scriptstyle u_2$}

\uput[0](2.2,-1.25){$\scriptstyle u_3$}

\ncline{A}{D}\ncline{C}{D}

\ncline{A}{E}\ncline{C}{E} \ncline{D}{E}

\psset{linestyle=dashed} \ncline{A}{C}
\end{pspicture}
\caption{$v_B,u_1,u_2,u_3$ in $G_{\mathcal{H}}^K$}

\label{fig1}
\end{figure}

Thus, we can find a dominating set $D'\cap V_2=\emptyset$ of
$G_{\mathcal{H}}^K$ remaining the cardinality of $D$. $D'$ is a
solution of the \textsc{Quasi-dominating Set} problem on
$G_{\mathcal{H}}$.

The other direction of this proof is trivial. We complete the
proof.\qed
\end{proof}

\begin{corollary}
Let $\mathcal{H}$ be a 3-uniform hypergraph. The \textsc{Vertex
Cover} problem on $\mathcal{H}$, the \textsc{Dominating Set} problem
on $G_\mathcal{H}^K$ and the \textsc{Quasi-dominating Set} problem
on $G_{\mathcal{H}}$ are equivalent each other.
\end{corollary}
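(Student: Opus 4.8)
The plan is to prove the corollary by transitivity, chaining together the two equivalences established just above it. The relation ``$(I,k)$ is a YES-instance iff $(I',k)$ is a YES-instance,'' taken between parameterized problems with the parameter $k$ held fixed, is plainly symmetric and transitive, so it suffices to exhibit two links in a chain joining all three problems; the remaining equivalence then comes for free.

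First I would make precise the equivalence that the text calls ``apparent,'' namely that the \textsc{Vertex Cover} problem on $\mathcal{H}$ and the \textsc{Quasi-dominating Set} problem on $G_{\mathcal{H}}$ coincide under the identity map on $V_1=\mathcal{V}$. A set $D\subseteq\mathcal{V}$ covers a hyperedge $B$ exactly when $D\cap B\neq\emptyset$; by the definition of $G_{\mathcal{H}}$ the neighbors of the vertex $v_B\in V_2$ are precisely the members of $B$, so $D\cap B\neq\emptyset$ holds iff $v_B$ has a neighbor in $D$. Hence $D$ is a vertex cover of $\mathcal{H}$ iff every $v_B\in V_2$ is dominated by $D$, i.e.\ iff $D$ is a quasi-dominating set of $G_{\mathcal{H}}$, and both formulations impose the identical bound $|D|\leq k$.

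The second link is supplied directly by the preceding lemma, which states that $(G_{\mathcal{H}}^K,k)$ is a YES-instance of \textsc{Dominating Set} iff $(G_{\mathcal{H}},k)$ is a YES-instance of \textsc{Quasi-dominating Set}, again with the same parameter $k$. Combining the two links, a YES-instance of any one of the three problems on its respective graph with parameter $k$ corresponds to a YES-instance of each of the other two with the same $k$, which is exactly the claimed mutual equivalence.

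There is no genuine obstacle here: the corollary is a bookkeeping consequence of the two equivalences already proved. The only point that deserves care is that the parameter $k$ is carried through unchanged at every step; this is immediate, since each link uses the identity on the relevant vertex set together with the same cardinality bound $\leq k$, and it is precisely what makes these equivalences usable for transferring kernelization bounds in the sequel.
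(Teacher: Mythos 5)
Your proposal is correct and follows essentially the same route as the paper: the paper treats the corollary as an immediate consequence of the stated (``apparent'') equivalence between \textsc{Vertex Cover} on $\mathcal{H}$ and \textsc{Quasi-dominating Set} on $G_{\mathcal{H}}$ together with the preceding lemma linking the latter to \textsc{Dominating Set} on $G_{\mathcal{H}}^K$, combined by transitivity. Your only addition is to spell out the first link explicitly (neighbors of $v_B$ in $G_{\mathcal{H}}$ are exactly the elements of $B$, so domination of $V_2$ coincides with hitting every hyperedge), which is a correct and welcome elaboration of what the paper leaves implicit.
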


\subsection{Kernelization}
Alber \emph{et al}. \cite{alb:fel:nie:1,alb:fel:nie:2} first show
that the \textsc{Dominating Set} problem on planar graphs has a
problem kernel with size $335k$ based on two reduction rules. Chen
\emph{et al}. \cite{che:fer:kan:xia} extend those two reduction
rules and further reduce this upper bound on the kernel size to
$67k$.

\begin{theorem}[\cite{che:fer:kan:xia}]\label{tchen}
Let $G$ be a planar graph. The \textsc{Dominating Set} problem
admits a problem kernel of size $67k$ on $G$.
\end{theorem}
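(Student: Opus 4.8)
The plan is to follow the neighborhood-based kernelization of Alber, Fellows and Niedermeier and then sharpen the counting as in Chen \emph{et al}. First I would introduce, for each vertex $v$, the partition of its neighborhood into \emph{exit}, \emph{guard} and \emph{prisoner} vertices: writing $N[v]$ for the closed neighborhood, $N_1(v)$ collects the neighbors of $v$ having a neighbor outside $N[v]$, $N_2(v)$ collects the remaining neighbors adjacent to some vertex of $N_1(v)$, and $N_3(v)=N(v)\setminus(N_1(v)\cup N_2(v))$ is the set of prisoners. The analogous partition $N_1(v,w),N_2(v,w),N_3(v,w)$ is defined for a pair $\{v,w\}$ relative to $N[v]\cup N[w]$.

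Next I would state and verify the two reduction rules. Rule 1: if $N_3(v)\neq\emptyset$, then no vertex dominates the prisoners more cheaply than $v$ itself, so one may delete $N_2(v)\cup N_3(v)$ and attach a pendant gadget forcing $v$ into every solution; this leaves the domination number unchanged. Rule 2 is the pair analogue, with a short case analysis on which of $v$, $w$, or both are needed to cover $N_3(v,w)$. The soundness of each rule is the routine verification that a minimum dominating set can always be rerouted to the cheapest dominator, and both rules are clearly polynomial-time testable, so exhaustive application produces a reduced equivalent instance $(G',k')$ with $k'\le k$ in polynomial time.

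The core of the argument is the linear bound on a \emph{reduced} planar graph. I would fix a minimum dominating set $D$ with $|D|\le k'$ and build a maximal \emph{region decomposition}: a family of regions, each a closed area of the plane delimited by two vertices $v,w\in D$ and two $v$--$w$ paths of length at most three, chosen interior-disjoint so that they cover all but $O(|D|)$ of the vertices. Planarity together with the reduction rules then forces each region to contain only a constant number of vertices: the rules have removed all prisoners, so no region can trap many vertices that are dominated only from inside it. Meanwhile, an Euler-formula count on the planar multigraph whose nodes are the vertices of $D$ and whose faces are the regions bounds the number of regions by a linear function of $|D|$.

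Finally I would add up the three contributions — the vertices of $D$, the $O(|D|)$ vertices lying outside all regions, and the bounded number of vertices inside each of the linearly many regions — to obtain $|V(G')|\le 67k$. The main obstacle is exactly this last, quantitative step: matching the constant $67$ requires the \emph{extended} reduction rules and the refined region analysis of Chen \emph{et al}., whose sharpened bounds on the per-region vertex count and on the number of regions are what improve the earlier $335k$ estimate down to $67k$. Everything upstream — soundness of the rules and the polynomial running time — is routine; the delicate planar counting is where the real work lies.
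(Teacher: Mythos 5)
This statement is not proven in the paper at all: it is Theorem~\ref{tchen}, imported verbatim from Chen, Fernau, Kanj and Xia \cite{che:fer:kan:xia}, and the paper uses it as a black box in the kernelization of Section~5. So there is no internal proof to compare against; the relevant question is whether your outline would stand as a proof of the cited result on its own.

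It would not, and you have in effect conceded the gap yourself. Your architecture is the correct one from the literature --- the $N_1/N_2/N_3$ neighborhood partition, the single-vertex and pair reduction rules with their soundness arguments, a maximal region decomposition of the reduced plane graph anchored at dominating-set vertices with boundary paths of length at most three, and an Euler-formula count of the regions --- but that machinery, with only the two rules you actually state, is precisely the Alber--Fellows--Niedermeier argument and yields the $335k$ kernel \cite{alb:fel:nie:1,alb:fel:nie:2}, not $67k$. The entire content of the theorem as stated is the constant $67$, and your final paragraph defers exactly that step (``matching the constant $67$ requires the extended reduction rules and the refined region analysis of Chen \emph{et al}.'') to the reference being proved. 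The paper itself notes that Chen \emph{et al}.\ work with eight reduction rules, not two, and the sharpened per-region vertex bound and region count are where all the work lies; asserting them is circular. Two smaller inaccuracies: the rules do not ``remove all prisoners'' --- each application of Rule~\ref{DS-reduce1} or Rule~\ref{DS-reduce2} adds gadget vertices that are themselves in $N_3$ of their anchors, so reduced graphs still have nonempty prisoner sets, and the region analysis must account for these gadget vertices; and the vertices lying outside all regions are bounded via properties of a \emph{maximal} decomposition that need proof, not just the choice of interior-disjoint regions. To be a proof of Theorem~\ref{tchen} your sketch would need the full list of extended rules, their soundness, and the explicit constants in the region counting; as written it establishes only a linear kernel with an unspecified (in fact larger) constant.
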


All the reduction rules is founded on neighborhoods of vertices. The
\emph{neighborhood} of $v$ in $G=(V,E)$ is $N(v)=\{x\in V\ |\
\{x,v\}\in E\}$. Then we can partition $N(v)$ into 3 disjoint sets
as follows,
\begin{itemize}
\item $N_1(v)=\{u\in N(v)\ |\ N(u)\backslash N[v]\neq\emptyset\}.$

\item $N_2(v)=\{u\in N(v)\backslash N_1(v)\ |\ N(u)\cap N_1(v)\neq\emptyset\}$.

\item $N_3(v)=N(v)\backslash(N_1(v)\cup N_2(v))$.
\end{itemize}

\noindent Similarly, for two distinct vertices $v,w\in V$, we can
also partition $N(v,w)=N(v)\cup N(w)$ into 3 disjoint sets as
follows,
\begin{itemize}
\item $N_1(v,w)=\{u\in N(v,w)\ |\ N(u)\backslash N[v,w]\neq\emptyset\}.$

\item $N_2(v,w)=\{u\in N(v,w)\backslash N_1(v,w)\ |\ N(u)\cap N_1(v,w)\neq\emptyset\}$.

\item $N_3(v,w)=N(v,w)\backslash(N_1(v,w)\cup N_2(v,w))$.
\end{itemize}

As stated in \cite{che:fer:kan:xia}, all the vertices of $G$ are
colored black initially. Reduction rules will color some vertices
white, which means that these white vertices are excluded from an
optimal dominating set of $G$. We repeat applying eight reduction
rules in \cite{che:fer:kan:xia} to reduce the planar graph until the
resulting graph, i.e. a \emph{reduced graph}, is unchanged.

\begin{Rule}\label{DS-reduce3}
If there is a black vertex $x\in N_2(v)\cup N_3(v)$ for each black
vertex $v$, then color $x$ white.
\end{Rule}

As Figure~\ref{fig1} exhibits, there is always a black vertex $u\in
V_1$ with $v\in N_2(u)\cup N_3(u)$ for any black vertex $v\in V_2$.
Thus for the initial graph $G_{\mathcal{H}}^K$, this rule must color
all the vertices of $V_2$ white. It implies that the optimal
dominating set $D$ of $G_{\mathcal{H}}^K$ contains no vertex in
$V_2$. In other words, $D\subseteq V_1$ is a cover of $\mathcal{H}$,
which means that those vertices colored white in $V_1$ can be
removed from $\mathcal{H}$.

Note that the rest rules either color vertices white or remove some
white vertices except the following two ones.

\begin{Rule}\label{DS-reduce1}
If $N_3(v)\neq\emptyset$ for some black vertex $v$, then
\begin{itemize}
\item Remove the vertices in $N_2(v)\cup N_3(v)$ from the current graph.

\item Add a new white vertex $v'$ and an edge $\{v,v'\}$.
\end{itemize}
\end{Rule}

\begin{Rule}\label{DS-reduce2}
If $N_3(v,w)\neq\emptyset$ for two black vertices $v$, $w$ and if
$N_3(v,w)$ cannot be dominated by a single vertex from $N_2(v,w)\cup
N_3(v,w)$, then
\begin{description}
\item[Case 1:]If $N_3(v,w)$ can be dominated by a single vertex from
$\{v,w\}$:
\begin{itemize}
\item If $N_3(v,w)\subseteq N(v)$ and $N_3(v,w)\subseteq N(w)$:
\begin{itemize}
\item Remove the vertices in $N_3(v,w)\cup(N_2(v,w)\cap N(v)\cap
N(w))$ from the current graph.

\item Add two new white vertices $z,z'$ and four edges
$\{v,z\}$, $\{w,z\}$, $\{v,z'\}$, $\{w,z'\}$.
\end{itemize}

\item If $N_3(v,w)\subseteq N(v)$ and $N_3(v,w)\nsubseteq N(w)$:
\begin{itemize}
\item Remove the vertices in $N_3(v,w)\cup(N_2(v,w)\cap N(v))$ from
the current graph.

\item Add a new white vertex $v'$ and an edge $\{v,v'\}$.
\end{itemize}

\item If $N_3(v,w)\nsubseteq N(v)$ and $N_3(v,w)\subseteq N(w)$:
\begin{itemize}
\item Remove the vertices in $N_3(v,w)\cup(N_2(v,w)\cap N(w))$ from
the current graph.

\item Add a new white vertex $w'$ and an edge $\{w,w'\}$.
\end{itemize}
\end{itemize}

\item[Case 2:]If $N_3(v,w)$ can not be dominated by a single vertex
from $\{v,w\}$:
\begin{itemize}
\item Remove the vertices in $N_3(v,w)\cup N_2(v,w)$ from the
current graph.

\item Add two new white vertices $v',w'$ and two edges $\{v,v'\}$,
$\{w,w'\}$.
\end{itemize}
\end{description}
\end{Rule}

It is obvious that both Rule~\ref{DS-reduce1} and
Rule~\ref{DS-reduce2} add new white vertices to the graph. We
categorize those new white vertices into $V_2$. In this sense, they
are viewed as self-loops in $\mathcal{H}$.

Therefore, we achieve a kernelization for the \textsc{Vertex Cover}
problem on $\mathcal{H}$.
\begin{enumerate}
\item Construct $G_{\mathcal{H}}$ and $G_{\mathcal{H}}^K$.

\item Reduce $G_{\mathcal{H}}^K$ to a reduced graph $G'$ by using
reduction rules in \cite{che:fer:kan:xia}.

\item Remove white vertices in $V_1$ from $G'$.

\item Remove all the edges between two vertices of $V_1$ from $G'$.

\item Construct a reduced hypergraph $\mathcal{H}'$ based on the
resulting bipartite graph.
\end{enumerate}

By Theorem~\ref{tchen}, the proof of Theorem~\ref{t1} is complete.

\subsection{Dual Problem}
In this section, we give a linear kernelization for the
\textsc{Independent Set} problem on a planar 3-uniform hypergraphs.

\begin{theorem}
Let $\mathcal{H}$ be a planar 3-uniform hypergraph. The
\textsc{Independent Set} problem admits a problem kernel of size
$40k$ on $\mathcal{H}$.
\end{theorem}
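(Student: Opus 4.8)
The plan is to establish a linear kernel for the \textsc{Independent Set} problem on planar 3-uniform hypergraphs by passing through the equivalent graph formulation developed in the preceding subsection. Recall that the \textsc{Independent Set} problem is the dual of the \textsc{Vertex Cover} problem, and that Theorem~\ref{t1} already yields a $67k$-kernel for the latter via the local complete graph $G_{\mathcal{H}}^K$ and the known $67k$-kernel for \textsc{Dominating Set} on planar graphs. My first step is therefore to recast \emph{independence} in $\mathcal{H}$ as a suitable graph property on $G_{\mathcal{H}}^K$: a vertex subset $I\subseteq\mathcal{V}$ is independent in $\mathcal{H}$ precisely when no hyperedge contains two vertices of $I$, which translates to the condition that $I$ induces no edge among the $V_1$-vertices in $G_{\mathcal{H}}^K$ (since Definition~\ref{def1} adds exactly those edges $\{x,y\}$ with $x,y$ in a common hyperedge). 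This identifies \textsc{Independent Set} on $\mathcal{H}$ with an ordinary independent-set-type problem on the planar graph $G_{\mathcal{H}}^K$.

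Next I would invoke planarity to bound the instance size. By the lemma of the previous subsection $G_{\mathcal{H}}^K$ is planar, and I expect the target bound $40k$ to come from a Euler-formula counting argument combined with the kernelization machinery for a planar dominating-set-like problem, analogous in spirit to how the $67k$ figure arises in Theorem~\ref{tchen}. The key quantitative input is that in a planar graph an independent set of size at least $k$ forces a controlled relationship between the number of vertices, edges, and hyperedges; concretely, I would count the hyperedges incident to the independent set and its neighborhood, as in the proof of Theorem~\ref{tbi}, but now exploiting the sparsity $|E|\le 3|V|-6$ of planar graphs rather than a degree bound. This replaces the factor $(2d+1)d$ of the bounded-degree case by a universal constant, and the plan is to show that constant can be driven down to $40$.

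The main steps in order are: (i) translate independence in $\mathcal{H}$ to the induced-edge condition on $V_1$ in $G_{\mathcal{H}}^K$; (ii) apply the planar reduction rules (Rules~\ref{DS-reduce3}, \ref{DS-reduce1}, \ref{DS-reduce2}) to obtain a reduced planar instance whose size is linear in the relevant parameter; (iii) use Euler's formula for planar graphs to bound the number of vertices and hyperedges once the reduced instance admits no large independent set; and (iv) assemble these bounds to obtain the kernel size $40k$. I expect step (iii) to be the main obstacle: pinning down the exact constant $40$ requires a careful region-decomposition and discharging argument of the type used in \cite{alb:fel:nie:1,alb:fel:nie:2,che:fer:kan:xia}, rather than the crude global Euler bound, and one must be attentive to the white vertices (self-loops) introduced by the reduction rules so that they do not inflate the count. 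The duality framework of Theorem~\ref{dual} then furnishes a consistency check, since this $40k$ upper bound together with the $67k$ bound for \textsc{Vertex Cover} should be compatible with the parametric-duality lower bounds derived from the NP-hardness of the planar problem.
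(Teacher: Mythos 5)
Your proposal misses the paper's central idea and substitutes machinery that does not apply. The paper proves this theorem by reducing \textsc{Independent Set} on $\mathcal{H}$ to the \textsc{Induced Matching} problem on the planar graph $G_{\mathcal{H}}^K$: each vertex $u$ of an independent set is paired with an edge $\{u,v_B\}$ to a hyperedge-vertex $v_B$ with $u\in B$, and one checks that the resulting edge set is an induced matching precisely because no two chosen vertices lie in a common hyperedge (the offending edge $\{u_{i1},u_{j1}\}$ of $G_{\mathcal{H}}^K$ would otherwise join two matching edges), and conversely. The constant $40$ is then imported wholesale from the known $40k$ kernel for \textsc{Induced Matching} on planar graphs \cite{kan:pel:xia:sch}. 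Your sketch contains no source for the number $40$; you explicitly defer it to an unspecified ``careful region-decomposition and discharging argument,'' which is exactly the content you would need to supply. That is not a proof outline but a restatement of the difficulty.

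Beyond the missing idea, your step (ii) is actively wrong: Rules~\ref{DS-reduce3}, \ref{DS-reduce1} and \ref{DS-reduce2} are correctness-preserving for the \textsc{Dominating Set} problem (equivalently, for \textsc{Vertex Cover} on $\mathcal{H}$ via the \textsc{Quasi-dominating Set} formulation); they preserve the domination number of the graph, not the independence number of the hypergraph, so applying them to an \textsc{Independent Set} instance gives no guarantee of equivalence. Your step (i) (independence in $\mathcal{H}$ equals independence among the $V_1$-vertices of $G_{\mathcal{H}}^K$) is a correct observation, and combined with planarity it could in principle lead to a linear kernel by a different route, but you do not carry that through, and it is not the route that yields $40k$. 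Finally, the appeal to Theorem~\ref{dual} as a ``consistency check'' is harmless but contributes nothing to establishing the upper bound.
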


To prove the above theorem, we first need to introduce the
\textsc{Induced Matching} problem which is known to be $W[1]$-hard
\cite{kan:pel:xia:sch,mos:sik}.

A \emph{matching} $M$ in a graph $G=(V,E)$ is a subset of edges no
two of which have a common endpoint. If no two edges of $M$ are
joined by an edge of $G$, then $M$ is an \emph{induced matching}.

\begin{center}
\fbox{\parbox[c]{8cm}{\textsc{Induced Matching Problem}
\begin{description}
\item[Instance:]A graph $G=(V,E)$ and a nonnegative integer $k$.

\item[Parameter:]$k$.

\item[Problem:]Is there an induced matching $M\subseteq E$ with $|E|\leq
k$?
\end{description}
}}
\end{center}

\begin{theorem}[\cite{kan:pel:xia:sch}]
Let $G$ be a planar graph. The \textsc{Induced Matching} problem
admits a problem kernel of size $40k$ on $G$.
\end{theorem}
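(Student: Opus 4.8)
The plan is to prove the kernel by a \emph{win/win} argument built on top of a short list of local reduction rules, in the spirit of the region-decomposition kernelizations of Alber \emph{et al.} and Chen \emph{et al.} referenced earlier. Write $\nu_I(G)$ for the size of a maximum induced matching. First I would fix polynomial-time reduction rules that provably leave $\nu_I(G)$ (with the parameter suitably adjusted) unchanged: (R1) delete isolated vertices; (R2) if a connected component is a single edge, delete it and replace $k$ by $k-1$, since such an edge lies in \emph{every} maximum induced matching; and (R3) the \emph{false-twin rule}: among vertices sharing exactly the same open neighbourhood, keep only one representative. Rule (R3) is the crucial one, since it is precisely the unbounded false-twin classes (for instance the degree-$2$ side of $K_{2,t}$, which is planar for every $t$) that would otherwise let $n$ grow while $\nu_I$ stays fixed. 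In a planar graph this rule is well behaved: any false-twin class of size at least $3$ has degree at most $2$, because $K_{3,3}$ is non-planar.

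Second, the safety of the rules is routine. Deleting an isolated vertex changes nothing; an isolated edge is forced into the optimum, so $\nu_I$ drops by exactly $1$; and if $x,y$ are false twins, any induced-matching edge $\{x,a\}$ already makes $a$ adjacent to $y$, so at most one member of a false-twin class can be matched, and discarding all but one representative preserves $\nu_I$ in both directions. Let $(G',k')$ be the reduced instance; then $k'\le k$ and $(G,k)$ and $(G',k')$ are equivalent.

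Third, the heart of the argument is the structural lemma that a reduced planar graph on $n$ vertices satisfies $\nu_I(G')\ge n/40$. I would prove this via a maximality/charging dichotomy. Compute a \emph{maximal} induced matching $M$ greedily; its vertex set $T=V(M)$ has size $2|M|$. Maximality forces every edge of $G'$ to have an endpoint in the closed neighbourhood $N[T]$, and after (R1) this gives $V(G')\subseteq N^2[T]$, i.e.\ every vertex lies within distance $2$ of $T$. It then remains to bound $|N^2[T]|$ by a constant multiple of $|M|$. I would do this by fixing a planar embedding, cutting the plane into regions delimited by the matched edges, and charging each vertex of $N^2[T]$ to an incident matched edge; Euler's formula (so $|E|\le 3n-6$ and average degree below $6$) together with the twin-freeness guaranteed by (R3) caps the number of vertices chargeable to a single matched edge, and optimizing this count yields $n\le 40\,|M|$, hence $\nu_I(G')\ge n/40$.

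Finally the kernel follows at once: run the reductions in polynomial time, then compute a maximal induced matching $M$ in $G'$. If $|M|\ge k'$, report a trivial YES-instance; otherwise the lemma gives $|G'|=n\le 40\,|M|<40k'$, which is the required kernel, and equivalence has already been established. The main obstacle is the structural lemma, and within it the region count: obtaining the sharp constant $40$ demands a careful discharging analysis of how many vertices of $N^2[T]$ can surround a single matched edge in a twin-free planar embedding, and verifying that rules (R1)--(R3), perhaps augmented by one or two further local rules, exclude every configuration violating the linear bound. The remaining ingredients---rule safety and the win/win packaging---are comparatively mechanical.
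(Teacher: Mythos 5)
This theorem is not proved in the paper at all: it is imported verbatim from \cite{kan:pel:xia:sch} and used as a black box (the paper only proves the surrounding equivalence between \textsc{Independent Set} on $\mathcal{H}$ and \textsc{Induced Matching} on $G^K_{\mathcal{H}}$). So the only meaningful comparison is with that source, and there your architecture is in fact the published one: Kanj \emph{et al.} likewise reduce by deleting false twins, prove a structural lower bound on the induced matching number of twinless planar graphs, and package the kernel as a win/win. Your rule-safety arguments are sound; in particular the twin argument (at most one vertex of a twin class can be matched, and a matched twin can be exchanged for the kept representative because the open neighbourhoods coincide, so $\nu_I$ is preserved in both directions) is exactly right, as is the observation that a maximal induced matching $M$ with $|M|\ge k$ immediately gives a YES-instance.

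The genuine gap is that your third step, the structural lemma $\nu_I(G')\ge n/40$ for reduced planar graphs, \emph{is} the theorem, and you do not prove it --- you announce that a charging/discharging analysis ``yields $n\le 40\,|M|$.'' As sketched, it is not clear it can be carried out: maximality of $M$ gives only $V(G')\subseteq N^2[T]$, vertices of $N^2[T]$ are in general not incident to matched edges (so the charging scheme itself is undefined as stated), and Euler's formula together with twin-freeness does not visibly cap the number of distance-$2$ vertices that a single matched edge must absorb --- twin-freeness forbids equal neighbourhoods, not crowding by vertices with pairwise distinct small neighbourhoods. You concede that (R1)--(R3) may need to be ``augmented by one or two further local rules,'' and that concession is precisely where the difficulty lives. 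In \cite{kan:pel:xia:sch} this lemma is the hard part: the authors first establish a stronger lower bound for planar graphs of minimum degree at least $3$ via an Euler-formula counting argument over a \emph{maximum} induced matching, and then a separate, case-heavy treatment of degree-$1$ and degree-$2$ vertices in twinless planar graphs brings the constant to $40$; they also show the constant cannot be pushed below roughly $27$, so the value $40$ genuinely depends on that fine analysis. Until the structural lemma is actually proved, your proposal delivers neither the constant $40$ nor even linearity of the kernel; the rest (rule safety, win/win packaging, polynomial running time) is correct but routine.
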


Since $\mathcal{H}$ is a planar 3-uniform hypergraph,
$G_{\mathcal{H}}^K$ must be planar. Hence, the rest task is to show
the following lemma.

\begin{lemma}
The \textsc{Independent Set} problem on a planar 3-uniform
hypergraph $\mathcal{H}$ is equivalent to the \textsc{Induced
Matching} problem on $G_{\mathcal{H}}^K$.
\end{lemma}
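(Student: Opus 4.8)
The plan is to exhibit an explicit parameter-preserving correspondence between independent sets of $\mathcal{H}$ and induced matchings of $G_{\mathcal{H}}^K$, so that $(\mathcal{H},k)$ is a YES-instance of \textsc{Independent Set} if and only if $(G_{\mathcal{H}}^K,k)$ is a YES-instance of \textsc{Induced Matching}. First I would record the local structure of $G_{\mathcal{H}}^K$ implied by Definition~\ref{def1}: each hyperedge $B=\{u_1,u_2,u_3\}$ contributes the three incidence edges $\{u_i,v_B\}$ together with the triangle on $\{u_1,u_2,u_3\}$, i.e. a $K_4$ on $\{v_B,u_1,u_2,u_3\}$; moreover $V_2$ is an independent set of $G_{\mathcal{H}}^K$ (all added edges lie inside $V_1$), so every edge of $G_{\mathcal{H}}^K$ has at least one endpoint in $V_1$, and two vertices of $V_1$ are adjacent exactly when they lie in a common hyperedge. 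Throughout I would assume $\mathcal{H}$ has no vertex lying outside every hyperedge; such isolated vertices are adjacent to nothing in $G_{\mathcal{H}}^K$ yet belong to every maximum independent set, so they are removed by a trivial preprocessing rule that lowers $k$ accordingly.

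For the forward direction, given an independent set $I\subseteq\mathcal{V}$ I would, for each $v\in I$, fix a hyperedge $B(v)\ni v$ and set $M=\{\,\{v,v_{B(v)}\}\ :\ v\in I\,\}$. Since $I$ is independent, no hyperedge meets $I$ twice, which immediately forces $B(v)\neq B(w)$ for distinct $v,w\in I$; hence the chosen edges are pairwise vertex-disjoint and $|M|=|I|$. To see that $M$ is \emph{induced}, I would rule out every potential connecting edge between endpoints of two distinct edges $\{v,v_{B(v)}\}$ and $\{w,v_{B(w)}\}$: the pair $\{v,w\}$ is a non-edge because $v,w$ share no hyperedge; each crossing pair $\{v,v_{B(w)}\}$ is a non-edge because $v\in B(w)$ would place two elements of $I$ in $B(w)$; and $\{v_{B(v)},v_{B(w)}\}$ is a non-edge because $V_2$ carries no edges. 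Thus $M$ is an induced matching of size $|I|$.

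For the reverse direction, given an induced matching $M$ I would select from each edge one endpoint lying in $V_1$ (always possible, since no edge sits inside $V_2$), obtaining a set $I$ of $V_1$-vertices. Because $M$ is a matching its edges are vertex-disjoint, so distinct edges yield distinct choices and $|I|=|M|$. To verify independence I argue by contradiction: if two chosen vertices $a,b$ coming from different matching edges lay in a common hyperedge, then $\{a,b\}\subseteq B$ would be an edge of $G_{\mathcal{H}}^K$ joining two distinct edges of $M$, contradicting that $M$ is induced; and $a\neq b$ holds since they come from disjoint edges. Hence $I$ is independent with $|I|=|M|$, which completes the equivalence.

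The two directions together show that the maximum size of an independent set of $\mathcal{H}$ equals the maximum size of an induced matching of $G_{\mathcal{H}}^K$; in particular the threshold $k$ is preserved, which is exactly what the kernelization of the theorem requires. I expect the only real care to lie in the reverse direction: one must confirm that every matching edge genuinely offers a $V_1$-endpoint and that it is the \emph{inducedness} of $M$ — rather than its being merely a matching — that forbids two selected vertices from sharing a hyperedge. The isolated-vertex caveat is the other point to state cleanly, since without it the forward map could discard vertices of $I$ that contribute no edge.
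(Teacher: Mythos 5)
Your proof is correct and uses essentially the same correspondence as the paper: independent vertices are matched to the $V_2$-vertices of hyperedges containing them, and conversely the $V_1$-endpoints of an induced matching form an independent set, with inducedness and independence trading places via the clique edges inside each hyperedge. You are in fact somewhat more careful than the paper in two spots it glosses over --- the paper's matching-to-independent-set direction tacitly assumes every matching edge joins $V_1$ to $V_2$, whereas $G_{\mathcal{H}}^K$ also has edges inside $V_1$ (your ``pick a $V_1$-endpoint'' step handles this), and the paper never addresses isolated vertices of $\mathcal{H}$, which do break the exact size correspondence without your preprocessing remark.
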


\begin{proof}
We assign a vertex $v_B$ in $G_{\mathcal{H}}^K$ for a hyperedge
$B\in\mathcal{H}$.
\begin{description}
\item[$(\Leftarrow)$] Let $M$ be an induced matching of
$G_{\mathcal{H}}$. Without loss of generality, let $I=\{u_{i1}\ |\
\{u_{i1},v_{B_i}\}\in M\}$. We claim that $I$ is an independent set
of $\mathcal{H}$. Otherwise, $\{u_{i1},u_{j1}\}\subseteq B$ for some
hyperedge $B$ with $i\neq j$. By definition of $G_{\mathcal{H}}^K$,
$\{u_{i1},v_{B_i}\}$ and $\{u_{j1},v_{B_j}\}$ are joined by
$\{u_{i1},u_{j1}\}$ in $G_{\mathcal{H}}^K$. It is a contradiction.

\item[$(\Rightarrow)$] Without loss of generality, let $I$ be an
independent set of $\mathcal{H}$. For $u_{i1},u_{j1}\in I$, we can
find two distinct hyperedges $B_i$ and $B_j$ such that $u_{i1}\in
B_i$ and $u_{j1}\in B_j$ by definition of independent set. It is
easy to see that $\{u_{i1},v_{B_i}\}$ and $\{u_{j1},v_{B_j}\}$ are
not joined by an edge in $G_{\mathcal{H}}^K$. Thus,
$M=\{u_{i1},v_{B_i}\}\ |\ u_i\in I\}$ is an induced matching of
$\mathcal{H}$.
\end{description}
We complete this proof.\qed
\end{proof}

By Theorem~\ref{dual}, we obtain lower bounds on kernel sizes for
the \textsc{Vertex Cover} problem and the \textsc{Independent Set}
problem on planar 3-uniform hypergraphs.

\begin{corollary}
For any $\varepsilon>0$, there is no
$(\frac{40}{39}-\varepsilon)k$-kernel for the \textsc{Vertex Cover}
problem on 3-uniform hypergraphs with bounded degree unless P$=$NP.
\end{corollary}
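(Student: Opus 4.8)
The plan is to obtain this bound purely by the parametric-duality lower-bound technique of Theorem~\ref{dual}, mirroring the two corollaries proved for the bounded-degree case, and using as the single new ingredient the \textsc{Independent Set} kernelization of size $40k$ established immediately above. For \textsc{Vertex Cover} the relevant dual problem is \textsc{Independent Set}, so in the notation of Theorem~\ref{dual} the dual kernel factor is $\alpha_d=40$. Before applying the theorem I would first confirm its two structural hypotheses: that \textsc{Vertex Cover} is NP-hard on the class of 3-uniform hypergraphs under consideration, and that (\textsc{Vertex Cover}, \textsc{Independent Set}) form a parametric-dual pair under a valid size function, taking $s(I)$ to be the number of vertices $n$ and $k_d=s(I)-k=n-k$; one checks $0\le k\le n$, $n\le|I|$, and that $s$ does not depend on $k$, exactly as in the preceding corollaries.

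Next I would argue by contradiction. Assume that for some fixed $\varepsilon>0$ the \textsc{Vertex Cover} problem did admit an $\alpha k$ kernelization with $\alpha=\tfrac{40}{39}-\varepsilon$; it suffices to treat $\varepsilon<\tfrac{1}{39}$, since ruling out these factors rules out every smaller factor as well and keeps $\alpha\ge 1$ as Theorem~\ref{dual} requires. With $\alpha_d=40$ one then computes
\[
(\alpha-1)(\alpha_d-1)=\Bigl(\tfrac{40}{39}-\varepsilon-1\Bigr)\cdot(40-1)=\Bigl(\tfrac{1}{39}-\varepsilon\Bigr)\cdot 39=1-39\varepsilon<1 .
\]
Thus the hypothesis $(\alpha-1)(\alpha_d-1)<1$ of Theorem~\ref{dual} is satisfied, and the theorem forces $\mathrm{P}=\mathrm{NP}$. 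Consequently no $(\tfrac{40}{39}-\varepsilon)k$ kernelization for \textsc{Vertex Cover} can exist unless $\mathrm{P}=\mathrm{NP}$, which is precisely the claim.

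The arithmetic here is immediate; the only part that genuinely needs care --- and hence the main obstacle --- is verifying the premises of Theorem~\ref{dual} rather than performing any calculation. Specifically, one must ensure that the factor $\alpha_d=40$ is the dual kernel size for the \emph{same} class on which the \textsc{Vertex Cover} bound is being claimed, so that the two kernelizations feed consistently into the duality inequality, and that the pair \textsc{Vertex Cover}/\textsc{Independent Set} really does meet the size-function conditions stated after Theorem~\ref{dual}. Once this bookkeeping is in place --- matching the constant $40$ to the \textsc{Independent Set} kernel proved just above and confirming NP-hardness on the restricted class --- the contradiction is obtained immediately, and the bound falls out as $1+\tfrac{1}{\alpha_d-1}=1+\tfrac{1}{39}=\tfrac{40}{39}$.
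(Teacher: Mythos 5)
Your proposal is correct and is essentially the paper's own (unwritten) argument: the paper obtains this corollary simply by invoking Theorem~\ref{dual} with the $40k$ \textsc{Independent Set} kernel as $\alpha_d=40$, and your computation $(\alpha-1)(\alpha_d-1)=\bigl(\tfrac{1}{39}-\varepsilon\bigr)\cdot 39=1-39\varepsilon<1$ just makes the suppressed arithmetic explicit. Your bookkeeping concern about matching classes is well placed --- the phrase ``bounded degree'' in the corollary's statement is evidently a typo for ``planar,'' since the $40k$ dual kernel feeding into the duality is proved for planar $3$-uniform hypergraphs.
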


\begin{corollary}
For any $\varepsilon>0$, there is no
$(\frac{67}{66}-\varepsilon)k$-kernel for the \textsc{Independent
Set} problem on 3-uniform hypergraphs with bounded degree unless
P$=$NP.
\end{corollary}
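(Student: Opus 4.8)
The plan is to obtain this corollary as a direct instance of the parametric-duality bound of Theorem~\ref{dual}, in exactly the way the preceding \textsc{Vertex Cover} corollary is derived. I would take $P$ to be the \textsc{Independent Set} problem on planar $3$-uniform hypergraphs and $P_d$ to be its parametric dual, which is the \textsc{Vertex Cover} problem on the same class (the duality between these two problems was noted earlier in the paper). The size function I would use is the natural one, $s(I,k)=|\mathcal{V}|=n$, so that $k_d=n-k$; one checks that it meets the three bullet conditions listed after Theorem~\ref{dual}, namely $0\le k\le n$, $n\le|I|$, and independence from the particular value of $k$.

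The main argument is then by contradiction. Suppose that for some fixed $\varepsilon>0$ the \textsc{Independent Set} problem on this class admitted an $\alpha k$-kernelization with $\alpha=\frac{67}{66}-\varepsilon$. By Theorem~\ref{t1}, the dual problem $P_d=\textsc{Vertex Cover}$ already admits a $67k_d$-kernelization, so I may set $\alpha_d=67$. Both coefficients satisfy $\alpha,\alpha_d\ge 1$, so the hypotheses of Theorem~\ref{dual} are met once NP-hardness is in hand, and computing the product gives
\[
(\alpha-1)(\alpha_d-1)=\Bigl(\tfrac{1}{66}-\varepsilon\Bigr)\cdot 66=1-66\varepsilon<1 .
\]
Theorem~\ref{dual} then forces P$=$NP, contradicting the standing complexity assumption and establishing the claimed nonexistence of a $(\frac{67}{66}-\varepsilon)k$-kernel.

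The only genuinely non-mechanical ingredient is verifying the NP-hardness premise of Theorem~\ref{dual}, i.e.\ that \textsc{Independent Set} is NP-hard on planar $3$-uniform hypergraphs; I expect this to follow from a routine reduction (for example from \textsc{Independent Set} on planar graphs, or through the local-complete-graph construction $G^K_{\mathcal{H}}$ already used in this section). The remaining subtlety is pure bookkeeping: this bound pairs the \emph{unknown} coefficient $\alpha$ of the \textsc{Independent Set} kernel with the \emph{known} $67k$ \textsc{Vertex Cover} kernel of Theorem~\ref{t1}, whereas the companion $\frac{40}{39}$ bound for \textsc{Vertex Cover} instead draws on the $40k$ \textsc{Independent Set} kernel proved in this subsection. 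Keeping the roles of $P$ and $P_d$, and the matching of the two kernel coefficients, correctly assigned is the one place where an error could easily slip in.
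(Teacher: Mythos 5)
Your proposal matches the paper's own (essentially unwritten) derivation: the paper obtains this corollary exactly by feeding the hypothetical $\alpha=\frac{67}{66}-\varepsilon$ kernel for \textsc{Independent Set} together with the $\alpha_d=67$ \textsc{Vertex Cover} kernel of Theorem~\ref{t1} into Theorem~\ref{dual}, and your computation $(\frac{1}{66}-\varepsilon)\cdot 66=1-66\varepsilon<1$ is the whole argument. Your reading of the statement as being about \emph{planar} 3-uniform hypergraphs (the phrase ``with bounded degree'' is evidently a copy-paste slip in the corollary) and your pairing of the unknown \textsc{Independent Set} coefficient with the known $67k$ bound are both the intended ones.
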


\section{Conclusion}
In this paper we study kernelizations of the \textsc{Vertex Cover}
problem on 3-uniform hypergraphs, and show that this problem in
three classes of 3-uniform hypergraphs has linear kernels. We give
lower and upper bounds on the kernel size for them by the parametric
duality. An interesting open question is how to decide these
3-uniform hypergraphs efficiently. It is a challenge to explore more
3-uniform hypergraphs with less linear kernels in the future.
Additionally, designing better kernelization algorithms and pursuing
less kernel are also challenging tasks for the \textsc{Vertex Cover}
problem on 3-uniform hypergraphs.

\iffalse
\section*{Acknowledgement}
%
We would like to thank Prof. Yuxi Fu and Dr. Yijia Chen for their
valuable suggestions and discussion. This research is supported in
part by the National 973 Project (2003CB317005), the National Nature
Science Foundation of China (60573002, 60703033). \fi

\end{document}